\newcommand{\Endproof}{\hfill$\Box$\\}
\begin{document}
%\title{On Travelling Out of Shire\thanks{This work was supported by Gandalf.}}
\title{Classical and Quantum Algorithms for Constructing Text from Dictionary Problem}
% \thanks item is optional

\author{Kamil~Khadiev$^{1,2,}$\thanks{This work was supported by Russian Science Foundation Grant 19-71-00149.} \and Vladislav Remidovskii$^{2}$}
\institute{$^1$ Smart Quantum Technologies Ltd., Kazan, Russia\\
$^2$ Kazan Federal University, Kazan, Russia\\
\email{kamilhadi@gmail.com,vladremidovskiyvs@gmail.com}
}
\maketitle

\begin{abstract}
We study algorithms for solving the problem of constructing a text (long string) from a dictionary (sequence of small strings). The problem has an application in bioinformatics and has a connection with the Sequence assembly method for reconstructing a long DNA sequence from small fragments.
The problem is constructing a string $t$ of length $n$ from strings $s^1,\dots, s^m$ with possible intersections. We provide a classical algorithm with running time $O\left(n+L +m(\log n)^2\right)=\tilde{O}(n+L)$ where $L$ is the sum of lengths of $s^1,\dots,s^m$. We provide a quantum algorithm with running time $O\left(n +\log n\cdot(\log m+\log\log n)\cdot \sqrt{m\cdot L}\right)=\tilde{O}\left(n +\sqrt{m\cdot L}\right)$. Additionally, we show that the lower bound for the classical algorithm is $\Omega(n+L)$. Thus, our classical algorithm is optimal up to a log factor, and our quantum algorithm shows speed-up comparing to any classical algorithm in a case of non-constant length of strings in the dictionary.  
\end{abstract}

\begin{keywords}
quantum computation, quantum models, quantum algorithm, query model, string constructing, sequence assembly, DNA constructing
\end{keywords}

\section{Introduction}
\emph{Quantum computing} \cite{nc2010,a2017,aazksw2019part1} is one of the hot topics in computer science of last decades.
There are many problems where quantum algorithms outperform the best known classical algorithms \cite{dw2001,quantumzoo,ks2019,kks2019}.

One of such problems are problems for strings. Researchers show the power of quantum algorithms for such problems in  \cite{m2017,bbbv1997,rv2003,ki2019}.

In this paper, we consider the problem of constructing text from dictionary strings with possible intersections. We have a text $t$ of length $n$ and a dictionary $s=s^1,\dots, s^m$. The problem is constricting $t$ only from strings of $s$ with possible intersections. The problem is connected with the sequence assembly method for reconstructing a long DNA sequence from small fragments \cite{msdd2000,bt2013}.

We suggest a classical algorithm  with running time $O\left(n+L +m(\log n)^2\right)=\tilde{O}(n+L)$, where $L=|s^1|+\dots+|s^m|$, $|s^i|$ is a length of $s^i$ and $\tilde{O}$ does not consider log factors. The algorithm uses segment tree\cite{l2017guide} and suffix array\cite{mm90} data structures, concepts of comparing string using rolling hash\cite{kr87,Fre79} and idea of prefix sum \cite{cormen2001}. 

The second algorithm is quantum. It  uses similar ideas and quantum algorithm for comparing two strings with quadratic speed-up comparing to classical counterparts \cite{ki2019}. The running time for our quantum algorithm is  

$O\left(n +\log n\cdot(\log m+\log\log n)\cdot \sqrt{m\cdot L}\right)=\tilde{O}\left(n +\sqrt{m\cdot L}\right)$.

Additionally, we show the lower bound in a classical case that is $\Omega(n+L)$. Thus, we get the optimal classical algorithm in a case of $m(\log n)^2=O(L+n)$.  It is true, for example, in a case of $O(m)$ strings form $s$ has length at least $\Omega\left((\log n)^2\right)$ or in a case of $m=o\left(n/(\log n)^2\right)$. In the general case, the algorithm is an optimal algorithm up to a log factor. 
The quantum algorithm is better than any classical counterparts in a case of  $ \log n\cdot(\log m+\log\log n)\cdot \sqrt{m\cdot L}=o(L)$. It happens if $O(m)$ strings from $s$ has length at least $\Omega(\log n\cdot(\log m+\log\log n))$.

 Our algorithm uses some quantum algorithms as a subroutine, and the rest part is classical. We investigate the problems in terms of query complexity. The query model is one of the most popular in the case of quantum algorithms. Such algorithms can do a query to a black box that has access to the sequence of strings. As a running time of an algorithm, we mean a number of queries to the black box.

The structure of the paper is the following. We present tools in Section \ref{sec:tools}. Then, we discuss the classical algorithm in Section \ref{sec:classical} and quantum algorithm in Section \ref{sec:quantum}. Section \ref{sec:lower} contains lower bound. 

%%%%%%%%%%%%%%%%%%%%%%%%%%%%%%%%%%%%%%%%%%%%%%%%
%%% Tools
%%%%%%%%%%%%%%%%%%%%%%%%%%%%%%%%%%%%%%%%%%%%%%%
\section{Tools}\label{sec:tools}

Our algorithms uses several data structures and algorithmic ideas like segment tree\cite{l2017guide}, suffix array\cite{mm90}, rolling hash\cite{kr87} and prefix sum \cite{cormen2001}. Let us describe them in this section. 
\subsection{Preliminaries}
Let us consider a string $u=(u_1,\dots,u_l)$ for some integer $l$. Then, $|u|=l$ is the length of the string. $u[i,j]=(u_i,\dots,u_j)$ is a substring of $u$. 

In the paper, we compare strings in lexicographical order. For two strings $u$ and $v$, the notation $u<v$ means $u$ precedes $v$ in lexicographical order.  
\subsection{Rolling Hash for Strings Comparing}\label{sec:roll-hash}
The rolling hash was presented in \cite{kr87}.
It is a hash function \[h_p(u)=\left(\sum\limits_{i=1}^{|u|}index(u_i)\cdot K^{i-1}\right)\mbox{ mod }p,\]
where $p$ is some prime integer, $K=|\Sigma|$ is a size of the alphabet and $index(u_i)\in\{0,\dots,K-1\}$ is the index of a symbol $u_i$ in the alphabet. For simplicity we consider binary alphabet. So, $K=2$ and $index(u_i)=u_i$.

We can use rolling hash and the fingerprinting method \cite{Fre79} for comparing two strings $u,v$. Let us  randomly choose $p$ from the set of the first $r$ primes, such that $r\leq \frac{\max(|u|,|v|)}{\varepsilon}$ for some $\varepsilon>0$. Due to Chinese Theorem and \cite{Fre79}, the following statements are equivalent $h_p(u)=h_p(v)$ and $u=v$ with error probability at most $\varepsilon$. If we have $\delta$ invocations of comparing procedure, then we should choose  $\frac{\delta \cdot \max(|u|,|v|)}{\varepsilon}$ primes. Due to Chebishev's theorem, the $r$-th prime number $p_r\approx r\ln r$. So, if our data type for integers is enough for storing $\frac{\delta \cdot \max(|u|,|v|)}{\varepsilon}\cdot (\ln(\delta) + \ln(\max(|u|,|v|))-\ln(\varepsilon))$, then it is enough for computing the rolling hash.

%For constructing the first $r$ prime numbers we can use Sieve of Eratosthenes\cite{cormen2001}, that constructs the first $r$ prime numbers with $O(r\log\log r)$ running time. Let $\textsc{Sieve}(r)$ be a function that returns the first $r$ primes. Let the result be ${\cal P}$.

Additionally, for a string $u$, we can compute prefix rolling hash, that is
$h_p(u[1,i])$. It can be computed in $O(|u|)$ running time using formula
\[h_p(u[1,i])=\left(h_p(u[1,i-1])+(2^{i-1}\mbox{ mod }p)\cdot u_i\right)\mbox{ mod }p\mbox{ and }h_p(u[1:0])=0.\]
Assume, that we store ${\cal K}_i=2^{i-1}$ mod $p$. We can compute all of them in $O(|u|)$ running time using formula ${\cal K}_i={\cal K}_{i-1}\cdot 2$ mod $p$.

Using precomputed prefix rolling hash for a string $u$ we can compute rolling hash for any substring $u[i,j]$ in $O(1)$ running time by formula
\[h_p(u[i,j])=\left(\sum\limits_{q=j}^{i}u_q\cdot 2^{q-1-(j-1)}\right)\mbox{ mod }p=\left(\sum\limits_{q=j}^{i}u_q\cdot 2^{q-1}\right)\cdot 2^{-(j-1)}\mbox{ mod }p=\]
\[=\left(\left(\sum\limits_{q=1}^{i}u_q\cdot 2^{q-1}\right) - \left(\sum\limits_{q=1}^{j-1}u_q\cdot 2^{q-1}\right)\right)\cdot 2^{-(j-1)}\mbox{ mod }p =\]\[= \left(h_p(u[1,i])-h_p(u[1,j-1])\right)\cdot(2^{-(j-1)} ) \mbox{ $($mod }p).\]

For computing the formula in $O(1)$ we should precompute ${\cal I}_i=2^{-i}$ mod $p$. We can compute it in $O(\log p+|u|)$ by the formula ${\cal I}_i={\cal I}_{i-1}\cdot 2^{-1}$ mod $p$ and ${\cal I}_{0}=1$. Due to Fermat's little theorem $2^{-1}$ mod $p=2^{p-2}$ mod $p$. We can compute it with $O(\log p)$ running time using Exponentiation by squaring algorithm.

Let $\textsc{ComputeKI}(\beta,p)$ be a procedure that computes ${\cal K}$ and ${\cal I}$ up to the power $\beta$ with $O(\beta+\log p)$ running time. Let $\textsc{ComputePrefixRollingHashes}(u,p)$ be a procedure that computes all prefix rolling hashes for a string $u$ and store them.

Assume, that we have two strings $u$ and $v$ and already computed prefix rolling hashes. Then, we can compare these strings in lexicographical order in $O(\log \min(|u|,|v|))$ running time. The algorithm is following. We search the longest common prefix of $u$ and $v$, that is $lcp(u,v)$. We can do it using binary search.
\begin{itemize}
    \item If a $mid\leq lcp(u,v)$, then $h_p(u[1,mid])=h_p(v[1,mid])$.
    \item If a $mid> lcp(u,v)$, then $h_p(u[1,mid])\neq h_p(v[1,mid])$.
\end{itemize}

Using binary search we find the last index $x$ such that $h_p(u[1,x])=h_p(v[1,x])$ and $h_p(u[1,x+1])\neq h_p(v[1,x+1])$. In that case $lcp(u,v)=x$

  After that, we compare $u_{t}$ and $v_t$ for $t=lcp(u,v)+1$. If $u_t<v_t$, then $u<v$; if $u_t>v_t$, then $u>v$; if $|u|=|v|=lcp(u,v)$, then $u=v$.

Binary search works with $O(\log (\min(|u|,|v|)))$ running time because we have computed all prefix rolling hashes already.

Let $\textsc{Compare}(u,v)$ be a procedure that compares $u$ and $v$ and returns $-1$ if $u<v$; $1$ if $u>v$; and $0$ if $u=v$.
\subsection{Segment Tree with Range Updates}\label{sec:segment-tree}

We consider a standard segment tree data structure \cite{l2017guide} for an array $a=(a_1,\dots, a_l)$ for some integer $l$. A segment tree for and array $a$ can be constructed in $O(l)$ running time. The data structure allows us to invoke the following requests in $O(\log l)$ running time.

\begin{itemize}
    \item {\bf Update.} Parameters are three integers $i, j, x$ ($1\leq i\leq j\leq l$). We assign $a_q=\max(a_q,x)$ for $i\leq q\leq j$. 
     \item {\bf Push.} We push all existing range updates. 
    \item {\bf Request.} For an integer $i$ ($1\leq i\leq l$), we should return $a_i$. 
\end{itemize}

Let $\textsc{ConstructSegmentTree}(a)$ be a function that constructs and returns a segment tree for an array $a$ in $O(l)$ running time. 

Let $\textsc{Update}(st,i,j,x)$ be a procedure that updates a segment tree $st$ in $O(\log l)$ running time.

Let $\textsc{Push}(st)$ be a procedure that push all existing range updates for a segment tree $st$ in $O(l)$ running time.

Let $\textsc{Request}(st,i)$ be a function that returns $a_i$ from a segment tree $st$. The running time of the procedure is $O(\log l)$. At the same time, if we invoke $\textsc{Push}$ procedure and after that do not invoke $\textsc{Update}$ procedure, then the running time of $\textsc{Request}$ is $O(1)$.
\subsection{Suffix Array}
Suffix array \cite{mm90} is an array $suf=(suf_1,\dots,suf_{l})$ for a string $u$ and $l=|u|$. The suffix array is a lexicographical order for all suffixes of $u$. Formally, $u[suf_i,l]<u[suf_{i+1},l]$ for any $i\in\{1,\dots,l-1\}$.

The suffix array can be computed in $O(l)$ running time.

\begin{lemma}[\cite{llh2018}]\label{lm:suf-arr}
A suffix array for a string $u$ can be constructed in $O(|u|)$ running time.
\end{lemma}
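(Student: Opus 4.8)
The plan is to prove the statement via the recursive divide-and-conquer construction known as the skew (DC3) algorithm, which reduces suffix-array construction on a string of length $|u|=n$ to a single instance on a string of length roughly $2n/3$ plus linear postprocessing. Since the paper works over a binary (hence integer) alphabet, I may freely use the fact that radix/counting sort of $n$ keys drawn from a range of size $O(n^{c})$ runs in $O(n)$ time; this is the only external primitive the argument needs.

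First I would partition the suffix start positions $\{1,\dots,n\}$ by residue modulo $3$ into classes $B_0,B_1,B_2$, and call the positions in $C=B_1\cup B_2$ the \emph{sample} positions (about $2n/3$ of them). For each $i\in C$ I form the character triple $(u_i,u_{i+1},u_{i+2})$, padding past the end with a sentinel smaller than every symbol, radix-sort all triples in $O(n)$ time, and assign to each a lexicographic rank. If the ranks are pairwise distinct the sample suffixes are already sorted; otherwise I build a reduced string $u'$ of length $|C|\le \lceil 2n/3\rceil$ by concatenating the ranks of the $B_1$ positions followed by the ranks of the $B_2$ positions, and recurse on $u'$. A short argument shows that the suffix order of $u'$ coincides with the lexicographic order of the corresponding sample suffixes of $u$, so the recursion returns exactly the ranks of all sample suffixes.

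Next I would sort the non-sample suffixes indexed by $B_0$: since each suffix $S_i$ with $i\in B_0$ equals $u_i$ followed by $S_{i+1}$ with $i+1\in B_1\subseteq C$, I represent $S_i$ by the pair $(u_i,\mathrm{rank}(S_{i+1}))$ and radix-sort these pairs in $O(n)$ time. Finally I merge the sorted $B_0$ list with the sorted sample list. The crux of the whole argument lives here, and is the step I expect to be the main obstacle: to keep each comparison in the merge at $O(1)$ I must exploit the residue alignment. When comparing $S_i$ (with $i\in B_0$) against $S_j$ (with $j\in C$), if $j\in B_1$ I compare $u_i$ and then the already-known ranks of $S_{i+1}$ and $S_{j+1}$, whose indices both lie in $C$; if $j\in B_2$ I compare $u_i,u_{i+1}$ and then the ranks of $S_{i+2}$ and $S_{j+2}$, again both in $C$. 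In either case at most two character comparisons plus one rank comparison decide the order, so the merge runs in $O(n)$. Collecting the costs yields the recurrence $T(n)=T(\lceil 2n/3\rceil)+O(n)$, whose solution is $T(n)=O(n)$, which establishes the lemma.
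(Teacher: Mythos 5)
Your proposal is correct: it is a faithful outline of the K\"arkk\"ainen--Sanders skew (DC3) algorithm, and every step you describe (radix-sorting the character triples of the mod-$3$ sample, recursing on a reduced string of length $\lceil 2n/3\rceil$, sorting the $B_0$ suffixes by the pair $(u_i,\mathrm{rank}(S_{i+1}))$, and the residue-aligned $O(1)$-time merge comparisons) is sound, with the recurrence $T(n)=T(\lceil 2n/3\rceil)+O(n)=O(n)$ closing the argument. The paper, however, does not prove this lemma at all: it imports it wholesale from \cite{llh2018}, which is a result on \emph{in-place} suffix sorting --- linear time with $O(1)$ additional workspace over integer alphabets --- a considerably more delicate construction than DC3. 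Since the paper uses only the $O(|u|)$ time bound and nothing about workspace, your route fully suffices for the lemma as stated, and it has the advantage of being self-contained and elementary; both arguments share the (satisfied) precondition that the alphabet admits linear-time radix sort, which holds here as $\Sigma=\{0,1\}$. Two details you compress and would need to spell out in a complete write-up: the ``short argument'' that the suffix order of the reduced string $u'$ matches the order of the sample suffixes rests on the sentinel convention --- the unique triple containing the end-of-string sentinel sits at the end of the $B_1$ block, so no suffix comparison in $u'$ ever proceeds ambiguously across the $B_1$/$B_2$ boundary --- and the bookkeeping for $n\bmod 3$ (padding so that the sample covers the string's end) must be fixed consistently, but both are standard and do not affect the asymptotics.
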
  

Let $\textsc{ConstructSuffixArray}(u)$ be a procedure that constructs a suffix array for a string $u$.

%%%%%%%%%%%%%%%%%%%%%%%%%%%%%%%%%%%%%%%%%%%%%%%%
%%% The Text Constructing from Dictionary With Intersections Problem
%%%%%%%%%%%%%%%%%%%%%%%%%%%%%%%%%%%%%%%%%%%%%%%
\section{Algorithms}\label{sec:algos}
Let us formally present the problem.

{\bf Problem.}
For some positive integers $n$ and $m$, we have a sequence of strings $s=(s^1,\dots,s^m)$. Each $s^i=(s^i_1,\dots,s^i_{l})\in \Sigma^l$ where $\Sigma$ is some finite size alphabet and $l=|s^i|$.  We call $s$ dictionary. Additionally, we have a string $t$ of length $n$, where $t=(t_1,\dots,t_n)\in \Sigma^n$. We call $t$ text. The problem is searching a subsequence $s^{i_1},\dots,s^{i_z}$ and positions $q_1,\dots,q_z$ such that $q_1=1$, $q_z=n-|s^{i_z}|+1$, $q_j\leq q_{j-1}+|s^{i_{j-1}}|$  for $j\in\{2,\dots,z\}$. Additionally, $t[q_j,q_j+|s^{i_j}|-1]=s^{i_j}$ for $j\in\{1,\dots,z\}$.

For simplicity, we assume that $\Sigma=\{0,1\}$, but all results are right for any finite alphabet.
 
Informally, we want to construct $t$ from $s$ with possible intersections. 
 
Firstly, let us present a classical algorithm.
\subsection{A Classical Algorithm}\label{sec:classical}
Let us present the algorithm.
Let $long_i$ be an index of a longest string from $s$ that can start in position $i$. Formally, $long_i=j$ if $s^j$ is a longest string from $s$ such that $t[i,i+|s^j|-1]=s^j$. Let $long_i=-1$ if there is no such string $s^j$.
If we construct such array, then we can construct $Q=(q_1,\dots,q_z)$ and $I = (i_1,\dots,i_z)$ that is solution of the problem in $O(n)$. A procedure $\textsc{ConstructQI}(long)$ from Algorithm \ref{alg:qi} shows it. If there is no such decomposition of $t$, then the procedure returns $NULL$.

\begin{algorithm}
\caption{$\textsc{ConstructQI}(long)$. Constructing $Q$ and $I$ from $last$}\label{alg:qi}
\begin{algorithmic}
\State $t \gets 1$
\State $i_1\gets long_1$
\State $q_1\gets 1$
\State $left\gets 2$
\State $right\gets |s^{i_1}|+1$
\While{$q_t<n$}
\State{$max\_i\gets left$}
\State{$max\_q\gets -1$}
\If{$long_{left}>0$}
\State{$max\_q\gets left + |s^{long_{left}}|-1$}
\EndIf
\For{$j\in \{left+1,\dots,right\}$}
\If{$long_j>0$ and $j + |s^{long_j}|-1>max\_q$}
\State{$max\_i\gets j$}
\State{$max\_q\gets j + |s^{long_j}|-1$}
\EndIf
\EndFor
\If{$max\_q= -1$ or $max\_q< right$}
\State Break the While loop and \Return $NULL$ \Comment{We cannot construct other part of the string $t$}
\EndIf 
\State $t\gets t+1$
\State $i_t\gets long_{max\_i}$
\State $q_t\gets max\_i$
\State $left\gets right+1$
\State $right\gets max\_q+1$
\EndWhile
\State \Return $(Q,I)$
\end{algorithmic}
\end{algorithm}

Let us discuss how to construct $long$ array.

As a first step, we choose a prime $p$ that is used for rolling hash. We choose $p$ randomly from the first $z=n\cdot m \cdot 4\lceil\log_2 n\rceil^2\cdot \frac{1}{\varepsilon}$ primes. In that case, due to results from Section \ref{sec:roll-hash}, the probability of error  is at most $\varepsilon$ in a case of at most $m \cdot 4\lceil\log_2 n\rceil$ strings comparing invocations.
 
As a second step, we construct a suffix array $suf$ for $t$. Then, we consider an array $a$ of pairs $(len, ind)$. One element of $a$ corresponds to one element of the suffix array $suf$. After that, we construct a segment tree $st$ for $a$ and use $len$ parameter of pair for maximum.

As a next step, we consider strings $s^i$ for $i\in\{1,\dots m\}$. For each string $s^i$ we find the smallest index $low$ and the biggest index $high$ such that all suffixes $t[suf_j,n]$ for $low\leq j \leq high$ has $s^i$ as a prefix. We can use binary search for this action. Because of sorted order of suffixes in suffix array, all suffixes with the prefix $s^i$ are situated sequently. 
As a comparator for strings, we use $\textsc{Compare}$ procedure.
Let us present this action as a procedure $\textsc{SearchSegment}(u)$ in Algorithm \ref{alg:search}. The algorithm returns $(NULL,NULL)$ if no suffix of $t$ contains $u$ string as a prefix.

\begin{algorithm}
\caption{$\textsc{SearchSegment}(u)$. Searching a indexes segment  of suffixes for $t$ that have $u$ as a prefix}\label{alg:search}
\begin{algorithmic}
\State $low\gets NULL$, $high\gets NULL$
\State $l\gets 1$
\State $r\gets n$
\State $Found\gets False$
\While{$Found=False$ and $l\leq r$}
\State $mid\gets (l+r)/2$
\State $pref\gets t[suf_{mid},\min(n,suf_{mid}+|u|-1)]$
\State $pref1\gets t[suf_{mid-1},\min(n,suf_{mid-1}+|u|-1)]$
\State $compareRes \gets \textsc{Compare}(pref,u), compareRes1 \gets \textsc{Compare}(pref1,u)$
\If {$compareRes=0$ and $compareRes1=-1$}
\State $Found\gets true$
\State $low\gets mid$
\EndIf
\If {$compareRes< 0$}
\State $l\gets mid+1$
\EndIf
\If {$compareRes\geq 0$}
\State $r\gets mid-1$
\EndIf
\EndWhile
\If{$Found=True$}
\State $l\gets 1$
\State $r\gets n$
\State $Found\gets False$
\While{$Found=False$ and $l\leq r$}
\State $mid\gets (l+r)/2$
\State $pref\gets t[suf_{mid},\min(n,suf_{mid}+|u|-1)]$
\State $pref1\gets t[suf_{mid+1},\min(n,suf_{mid+1}+|u|-1)]$
\State $compareRes \gets \textsc{Compare}(pref,u), compareRes1 \gets \textsc{Compare}(pref1,u)$
\If {$compareRes=0$ and $compareRes1=+1$}
\State $Found\gets true$
\State $high\gets mid$
\EndIf
\If {$compareRes\leq 0$}
\State $l\gets mid+1$
\EndIf
\If {$compareRes> 0$}
\State $r\gets mid-1$
\EndIf
\EndWhile
\EndIf
\State \Return $(low, high)$
\end{algorithmic}
\end{algorithm}

Then, we update values in the segment tree $st$ by a pair $(|s^i|,i)$.

After processing all strings from $(s^1,\dots,s^m)$, the array $a$ is constructed. We can construct $long$ array using $a$ and the suffix array $suf$. We know that  $i$-th element $(len,ind)$ stores the longest possible string $s^{ind}$ that starts from $suf_i$. It is almost definition of $long$ array. So we can put $long_{suf_i}=ind$, if $a_i=(ind,len)$.

Finally, we get the following Algorithm \ref{alg:classical} for the text constructing from a dictionary  problem.
  
\begin{algorithm}
\caption{The classical algorithm for the text $t$ constructing from a dictionary $s$ problem for an error probability $\varepsilon>0$}\label{alg:classical}
\begin{algorithmic}
\State $\alpha\gets m \cdot 4\lceil\log_2 n\rceil^2$
\State $r\gets n\cdot \alpha/\varepsilon$
\State $p\in_R \{p_1,\dots,p_r\}$\Comment{$p$ is randomly chosen from $\{p_1,\dots,p_r\}$}

\State $\textsc{ComputeKI}(n,p)$
\State $\textsc{ComputePrefixRollingHashes}(t,p)$
\For{$j\in\{1,\dots,m\}$}
\State $\textsc{ComputePrefixRollingHashes}(s^j,p)$
\EndFor
\State $suf\gets\textsc{ConstructSuffixArray}(t)$
\State $a\gets[(0,-1),\dots,(0,-1)]$\Comment{Initialization by $0$-array}
\State $st \gets\textsc{ConstructSegmentTree}(a)$
\For{$j\in\{1,\dots,m\}$}
\State $(low,high)\gets \textsc{SearchSegment}(s^j)$
\If{$(low,heigh)\neq (NULL,NULL)$}
\State $\textsc{Update}(st,low,high,(|s^j|,j))$
\EndIf
\EndFor
\State $\textsc{Push}(st)$
\For{$i\in\{1,\dots,n\}$}
\State $(len,ind)\gets \textsc{Request}(st,i)$
\State $long_{suf_i}\gets ind$
\EndFor
\State $(Q,I)\gets\textsc{ConstructQI}(long)$
\State\Return $(Q,I)$
\end{algorithmic}
\end{algorithm}

Let us discuss properties of  Algorithm \ref{alg:classical}.
\begin{theorem}\label{th:classical}
Algorithm \ref{alg:classical} solves the text $t$ constructing from a dictionary $s=(s^1,\dots,s^m)$ problem with $O\left(n+L +m(\log n)^2 -\log\varepsilon\right)$ running time end error probability $\varepsilon$ for some $\varepsilon>0$, $n=|t|$ and $L=|s^1|+\dots + |s^m|$. The running time is $O\left(n+L +m(\log n)^2\right)$ in a case of $\varepsilon=const$.
\end{theorem}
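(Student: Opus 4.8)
The proof splits into three parts: correctness of the computed $long$ array, correctness of the greedy decomposition produced by $\textsc{ConstructQI}$, and the accounting of error probability and running time.

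First I would argue, conditioned on the event that no hash collision occurs during the run, that the array $long$ is computed correctly, i.e. $long_i$ is the index of a longest dictionary string $s^j$ with $t[i,i+|s^j|-1]=s^j$, or $-1$ if none exists. The key facts are: (i) in the suffix array $suf$ of $t$, all suffixes having a fixed string $u$ as a prefix occupy a contiguous block $[low,high]$, so $\textsc{SearchSegment}(s^j)$ correctly identifies, by two binary searches using $\textsc{Compare}$ as the comparator, exactly the suffixes starting with $s^j$; (ii) the update $\textsc{Update}(st,low,high,(|s^j|,j))$ records on every such suffix position the candidate $s^j$ keyed by its length for the range-maximum; and (iii) after $\textsc{Push}$, the pair stored at index $i$ is the one of maximum length among all strings prefixing $t[suf_i,n]$, whence $long_{suf_i}=ind$ is the desired longest string starting at position $suf_i$. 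I would note the boundary convention that a suffix shorter than $s^j$ compares as strictly smaller and is thus excluded from $[low,high]$.

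Second, assuming $long$ is correct, I would establish that $\textsc{ConstructQI}$ returns a valid solution exactly when one exists; this is a greedy interval-covering argument. It suffices to place at each used position its longest string, since replacing any placed string by the longest one starting at the same position only increases the covered reach $j+|s^{long_j}|-1$ while preserving the gap constraint $q_j\le q_{j-1}+|s^{i_{j-1}}|$, so restricting attention to $long$ loses no solution. The loop maintains the window $[left,right]$ of newly reachable start positions and extends the frontier to the farthest reachable endpoint; by the standard exchange argument this greedy maximizes the reachable prefix of $t$ after each step, hence it reaches $n$ iff some decomposition does, and it returns $NULL$ precisely when the frontier cannot advance ($max\_q<right$) or no string starts in the window.

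Third, for the error probability I would bound the number of hash-equality tests performed. Each of the $m$ calls to $\textsc{SearchSegment}$ runs two binary searches of $O(\log n)$ steps, each step invoking $\textsc{Compare}$ a constant number of times, and each $\textsc{Compare}$ performs a binary search with $O(\log n)$ hash-equality tests; this gives at most $\alpha=m\cdot 4\lceil\log_2 n\rceil^2$ such tests. With $p$ drawn uniformly from the first $r=n\alpha/\varepsilon$ primes, Section \ref{sec:roll-hash} bounds a single test's collision probability by $n/r=\varepsilon/\alpha$, so a union bound over the $\alpha$ tests yields total error at most $\varepsilon$, and conditioned on no collision the two correctness arguments above apply. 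Finally, for the running time I would sum the cost of each line of Algorithm \ref{alg:classical}: drawing $p$ and $\textsc{ComputeKI}(n,p)$ cost $O(n+\log p)$; the prefix-hash precomputations cost $O(n)$ for $t$ and $O(L)$ over $s^1,\dots,s^m$; $\textsc{ConstructSuffixArray}$ and building and pushing the segment tree cost $O(n)$ by Lemma \ref{lm:suf-arr} and Section \ref{sec:segment-tree}; the dictionary loop costs $O((\log n)^2)$ per string for the nested binary searches of $\textsc{SearchSegment}$ plus $O(\log n)$ for $\textsc{Update}$, totalling $O(m(\log n)^2)$; the final $\textsc{Request}$ calls are $O(1)$ each after $\textsc{Push}$, giving $O(n)$, and $\textsc{ConstructQI}$ runs in $O(n)$ because its windows partition $[1,n]$. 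Using $\log p=O(\log r)=O(\log n+\log m+\log\log n-\log\varepsilon)$ and absorbing lower-order terms gives $O(n+L+m(\log n)^2-\log\varepsilon)$, which becomes $O(n+L+m(\log n)^2)$ for constant $\varepsilon$. I expect the greedy-optimality step and the precise comparison count feeding the union bound to be the main points requiring care.
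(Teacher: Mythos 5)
Your proof is correct and takes essentially the same route as the paper: the same per-line running-time accounting (with $\log p = O(\log n + \log m + \log\log n - \log\varepsilon)$ absorbed into the total) and the same error analysis counting $\alpha = 4m\lceil\log_2 n\rceil^2$ hash-equality tests against $r = n\alpha/\varepsilon$ primes via a union bound. Where the paper disposes of correctness with a one-line ``follows from construction,'' you fill in the suffix-array contiguity, range-maximum, and greedy exchange arguments explicitly --- a more detailed rendering of the same argument, not a different approach.
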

\begin{proof}
The correctness of the algorithm follows from construction.

Let us discuss running time of the algorithm. Note, that $m\leq L=\sum_{j=1}^m|s^j|$ and $1\leq |s^j|\leq n$ for $j\in\{1,\dots,m\}$.

 Due to results from Section \ref{sec:roll-hash}, $\textsc{ComputeKI}$ works with $O(n+\log p)$ running time. Let us convert this statement.
\[O(n+\log p)=O(n+\log (n\cdot \alpha/\varepsilon))=\]
\[=O(n + \log n + \log\alpha - \log\varepsilon)=O(n + \log\alpha - \log\varepsilon)=\]\[=
O(n + \log(m \cdot 4\lceil\log_2 n\rceil^2) - \log\varepsilon)
=O(n + \log m  +\log\log n - \log\varepsilon)=\]\[=
O(n+\log m - \log\varepsilon)\]

Due to results from Section \ref{sec:roll-hash}, $\textsc{ComputePrefixRollingHashes}$ works in linear running time. Therefore,   all invocations of $\textsc{ComputePrefixRollingHashes}$ procedure works in $O(n+\sum_{j=1}^m|s^j|)=O(n+L)$ running time.

Due to Lemma \ref{lm:suf-arr}, $\textsc{ConstructSuffixArray}$ works in $O(n)$ running time. Initializing of $a$ does $O(n)$ steps. Due to results from Section \ref{sec:segment-tree}, $\textsc{ConstructSegmentTree}$ works in $O(n)$ running time.

$\textsc{SearchSegment}$ procedure invokes $\textsc{Compare}$ procedure $O(\log n)$ times due to binary search complexity. $\textsc{Compare}$ procedure works in $O(\log n)$ running time. Therefore,  $\textsc{SearchSegment}$ works in $O((\log n)^2)$ running time.
Due to results from Section \ref{sec:segment-tree}, $\textsc{Update}$ procedure works in $O(\log n)$ running time.
Hence, the total complexity of processing all strings from the dictionary $s$ is $O\left(m\cdot((\log n)^2+\log n)\right)=O\left(m\cdot(\log n)^2\right)$.

The invocation of $\textsc{Push}$ works in $O(n)$ running time due to results from Section \ref{sec:segment-tree}.
The invocation of $\textsc{Request}$ works in $O(1)$ running time because we do not invoke $\textsc{Update}$ after $\textsc{Push}$. Therefore, constructing of the array $long$ takes $O(n)$ steps.

The running time of $\textsc{ConstructQI}$ is $O(n)$ because we pass each element only once.

So, the total complexity of the algorithm is
\[O\left(n+\log m - \log\varepsilon+n+L+n+n+n+m(\log n)^2 +n+n+n\right)=\]\[=O\left(n+L +m(\log n)^2 -\log\varepsilon\right).\]

Let us discuss the error probability. We have $4 \cdot m\lceil\log_2 n\rceil$ invocations of $\textsc{Compare}$ procedure. Each invocation of $\textsc{Compare}$ procedure compares rolling hashes at most $\lceil\log_2 n\rceil$ times. Due to results from Section \ref{sec:roll-hash}, if we compare strings of length at most $n$ using rolling hash $4 \cdot m\lceil\log_2 n\rceil^2$ times and choose $p$ from $r$ primes, then we get error probability at most $\varepsilon$.
\Endproof
\end{proof}

\subsection{A Quantum Algorithm}\label{sec:quantum}
Firstly, let us discuss a quantum subroutine.
There is a quantum algorithm for comparing two strings in a lexicographical order with the following property:

\begin{lemma}[\cite{ki2019}]\label{lm:str-compr}There is a quantum algorithm that compares two strings of length $k$ in lexicographical order with  query complexity $O(\sqrt{k}\log \gamma)$ and error probability $O\left(\frac{1}{ \gamma^3}\right)$ for some positive integer $\gamma$.
\end{lemma}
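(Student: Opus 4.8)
The plan is to reduce lexicographic comparison to a single quantum search problem and then amplify its success probability. Comparing two strings $u$ and $v$ of length $k$ in lexicographic order amounts to locating the leftmost position at which they differ: if $i^{*}$ is the smallest index with $u_{i^{*}}\neq v_{i^{*}}$, then one classical lookup of the two symbols at $i^{*}$ decides whether $u<v$ or $u>v$, while the absence of any differing position means $u=v$. Hence it suffices to find the minimum index of the marked set $M=\{\,i : u_i\neq v_i\,\}$, where the predicate $[u_i\neq v_i]$ is evaluated with $O(1)$ queries to the two strings.

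First I would solve this with a quantum minimum-index search in the D\"urr--H\o yer style. The core primitive is a Grover / amplitude-amplification search that, given a threshold $\tau$, detects whether $M\cap[1,\tau]\neq\emptyset$; iterating this inside an exponential (threshold-halving) search finds $\min M$ using $O(\sqrt{k})$ queries in total, since the per-stage costs $\sum_j O(\sqrt{k/2^{j}})$ telescope to $O(\sqrt{k})$. I would run this base procedure for a fixed number of iterations so that it returns the true minimum $i^{*}$ with success probability bounded below by a constant, say at least $1/2$, adopting the convention that it reports ``equal'' when it finds no marked element.

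To reach error $O(1/\gamma^{3})$ I would boost by independent repetition. Run the base procedure $r=\Theta(\log\gamma)$ times; for each returned candidate spend $O(1)$ queries to check that it is a genuine differing position, discarding it (treating it as $+\infty$) otherwise, and treating an ``equal'' report as $+\infty$ as well. Then output the minimum of the surviving candidates. The key structural fact is that every genuine differing position is at least $i^{*}$, so no false or spurious answer can ever undercut $i^{*}$; consequently the reported minimum equals $i^{*}$ as soon as at least one of the $r$ runs succeeds. The probability that all runs fail is at most $2^{-\Theta(r)}$, which is $O(1/\gamma^{3})$ for a suitable constant in $r=\Theta(\log\gamma)$, and the total query count is $O(\sqrt{k})\cdot O(\log\gamma)=O(\sqrt{k}\log\gamma)$, as claimed.

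The hard part will be the base minimum-index search: one must argue that the Grover-based threshold detection works correctly despite an unknown number of marked elements (handled by the standard exponential guessing of the iteration count) and that its total query cost genuinely collapses to $O(\sqrt{k})$ with constant success probability rather than incurring an extra $\log k$ factor. The verification step in the boosting phase is what lets me use the cheap ``take the minimum'' combination instead of a majority vote, and I would confirm that discarding unverified candidates cannot remove $i^{*}$ itself, since a correct run always verifies $u_{i^{*}}\neq v_{i^{*}}$.
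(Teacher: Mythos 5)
This lemma is imported by citation from \cite{ki2019} rather than proved in the paper, and your reconstruction follows essentially the same route as the proof in that reference: reduce lexicographic comparison to finding the leftmost differing index via a D\"urr--H\o yer-style first-difference search costing $O(\sqrt{k})$ queries with constant success probability, then boost to error $O\left(\frac{1}{\gamma^3}\right)$ by $\Theta(\log\gamma)$ independent repetitions, taking the minimum over verified candidates. Your argument is sound --- in particular the one-sided observation that a verified candidate can never undercut $i^{*}$, which justifies minimum-taking instead of majority vote --- so nothing further is needed.
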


Let $\textsc{QCompare\_strings\_base}(u,v,l)$ be a quantum subroutine for comparing two strings $u$ and $v$ of length $l$ in lexicographical order. We choose $\gamma = m\log n$. In fact, the procedure compares prefixes of $u$ and $v$ of length $l$.
 $\textsc{QCompare\_strings\_base}(u,v,l)$ returns $1$ if $u>v$; it returns $-1$ if $u<v$; and it returns $0$ if $u=v$.

Next, we use a $\textsc{QCompare}(u,v)$ that compares $u$ and $v$ string in lexicographical order. Assume that $|u|<|v|$. Then, if $u$ is a prefix of $v$, then $u<v$. If $u$ is not a prefix of $v$, then the result is the same as for $\textsc{QCompare\_strings\_base}(u,v,|u|)$. In the case of $|u|>|v|$, the algorithm is similar. The idea is presented in Algorithm \ref{alg:compare}.

\begin{algorithm}
\caption{The quantum algorithm for comparing two string in lexicographical order }\label{alg:compare}
\begin{algorithmic}
\If{$|u|=|v|$}
\State $Result \gets\textsc{QCompare\_strings\_base}(u,v,|u|) $
\EndIf
\If{$|u|<|v|$}
\State $Result \gets\textsc{QCompare\_strings\_base}(u,v,|u|) $
\If{$Result=0$}
\State $Result = -1$
\EndIf
\EndIf
\If{$|u|>|v|$}
\State $Result \gets\textsc{QCompare\_strings\_base}(u,v,|v|) $
\If{$Result=0$}
\State $Result = 1$
\EndIf
\EndIf
\State \Return $Result$
\end{algorithmic}
\end{algorithm}

Let us present a quantum algorithm for the text constructing form a dictionary problem. For the algorithm, we use the same idea as in the classical case, but we replace $\textsc{Compare}$ that uses the rolling hash function by $\textsc{QCompare}$. In that case, we should not construct rolling hashes. Let $\textsc{QSearchSegment}$ be a quantum counterpart of $\textsc{SearchSegment}$ that uses $\textsc{QCompare}$.

The quantum algorithm is presented as Algorithm \ref{alg:quantum}.

\begin{algorithm}
\caption{The quantum algorithm for the text $t$ constructing from a dictionary $s$ problem}\label{alg:quantum}
\begin{algorithmic}
\State $suf\gets\textsc{ConstructSuffixArray}(t)$
\State $a\gets[(0,-1),\dots,(0,-1)]$\Comment{Initialization by $0$-array}
\State $st \gets\textsc{ConstructSegmentTree}(a)$
\For{$j\in\{1,\dots,m\}$}
\State $(low,high)\gets \textsc{QSearchSegment}(s^j)$
\State $\textsc{Update}(st,low,high,(|s^j|,j))$
\EndFor
\State $\textsc{Push}(st)$
\For{$i\in\{1,\dots,n\}$}
\State $(len,ind)\gets \textsc{Request}(st,i)$
\State $long_{suf_i}\gets ind$
\EndFor
\State $(Q,I)\gets\textsc{ConstructQI}(long)$
\State\Return $(Q,I)$
\end{algorithmic}
\end{algorithm}

Let us discuss properties of Algorithm \ref{alg:quantum}.
\begin{theorem}
Algorithm \ref{alg:quantum} solves the text $t$ constructing from a dictionary $s=(s^1,\dots,s^m)$ problem in $O\left(n +\log n\cdot(\log m+\log\log n)\cdot \sqrt{m\cdot L}\right)$ running time end error probability $O\left(\frac{1}{m+\log n}\right)$. 
\end{theorem}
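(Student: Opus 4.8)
The plan is to mirror the correctness-plus-complexity structure of the classical proof (Theorem \ref{th:classical}), reusing the data-structure running times verbatim and replacing only the analysis of the string-comparison primitive. Correctness follows from construction exactly as before: the segment tree accumulates, for each suffix-array position, the longest dictionary string that is a prefix of the corresponding suffix, so after \textsc{Push} the array $long$ is filled correctly and \textsc{ConstructQI} reads off a valid decomposition. The only difference from the classical case is that \textsc{QSearchSegment} uses \textsc{QCompare} instead of the hash-based \textsc{Compare}, so the whole correctness argument is inherited provided every invocation of \textsc{QCompare} returns the right answer, which I defer to the error analysis.

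For the running time I would account term by term. The suffix-array construction, segment-tree construction, array initialization, \textsc{Push}, the $n$ invocations of \textsc{Request}, and \textsc{ConstructQI} all cost $O(n)$ exactly as in Theorem \ref{th:classical}. The new contribution comes from the $m$ invocations of \textsc{QSearchSegment}. Each such call runs a binary search over the $n$ suffixes, so it performs $O(\log n)$ invocations of \textsc{QCompare} on the string $s^j$ of length $|s^j|$. By Lemma \ref{lm:str-compr} with the chosen $\gamma=m\log n$, a single comparison of strings of length $|s^j|$ costs $O\!\left(\sqrt{|s^j|}\,\log(m\log n)\right)=O\!\left(\sqrt{|s^j|}\,(\log m+\log\log n)\right)$ queries; the prefix bookkeeping inside \textsc{QCompare} does not change this order. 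Hence processing $s^j$ costs $O\!\left(\log n\cdot(\log m+\log\log n)\cdot\sqrt{|s^j|}\right)$, and adding the \textsc{Update} cost $O(\log n)$ is absorbed. Summing over $j$ gives $O\!\left(\log n\cdot(\log m+\log\log n)\cdot\sum_{j=1}^{m}\sqrt{|s^j|}\right)$, and the Cauchy--Schwarz inequality $\sum_{j=1}^m\sqrt{|s^j|}\le\sqrt{m\sum_{j=1}^m|s^j|}=\sqrt{mL}$ converts this to the claimed $O\!\left(\log n\cdot(\log m+\log\log n)\cdot\sqrt{mL}\right)$. Combined with the $O(n)$ terms this yields the stated total.

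For the error probability I would count total comparisons and apply a union bound. There are $m$ calls to \textsc{QSearchSegment}, each doing $O(\log n)$ calls to \textsc{QCompare}, for a total of $O(m\log n)$ comparisons. By Lemma \ref{lm:str-compr} each has error $O(1/\gamma^3)=O\!\left(1/(m\log n)^3\right)$, so by the union bound the probability that any comparison errs is $O\!\left(m\log n\cdot\frac{1}{(m\log n)^3}\right)=O\!\left(\frac{1}{(m\log n)^2}\right)$, which is dominated by the claimed $O\!\left(\frac{1}{m+\log n}\right)$; on the event that all comparisons are correct the algorithm outputs a valid answer by the correctness argument above.

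The step I expect to require the most care is the error analysis, specifically pinning down the precise number of \textsc{QCompare} invocations so that the $\gamma^{-3}$ bound survives the union bound with margin. The subtlety is that \textsc{QCompare} wraps \textsc{QCompare\_strings\_base}, and one must confirm that the prefix-handling branches in Algorithm \ref{alg:compare} invoke the base comparison only once (so no extra multiplicative blowup in error), and that the binary search in \textsc{QSearchSegment} genuinely uses only $O(\log n)$ comparisons per dictionary string rather than $O(\log n)$ per suffix position. The running-time bound is then essentially an exercise, with the one genuine idea being the Cauchy--Schwarz step that turns $\sum_j\sqrt{|s^j|}$ into $\sqrt{mL}$ and thereby produces the quantum speed-up over the classical $L$.
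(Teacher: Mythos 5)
Your proposal is correct and follows essentially the same route as the paper's proof: correctness inherited from Theorem \ref{th:classical}, the same term-by-term $O(n)$ accounting, Lemma \ref{lm:str-compr} with $\gamma=m\log n$ giving $O\left(\sqrt{|s^j|}(\log m+\log\log n)\right)$ per comparison with $O(\log n)$ comparisons per string, and the same Cauchy--Schwarz step $\sum_j\sqrt{|s^j|}\le\sqrt{mL}$. The only cosmetic difference is that you bound the error via a union bound over the $O(m\log n)$ invocations of $\textsc{QCompare}$, whereas the paper multiplies the per-invocation success probabilities $\left(1-\frac{1}{(m\log n)^3}\right)^{2m(1+\log_2 n)}$; both computations yield an overall error that comfortably implies the claimed $O\left(\frac{1}{m+\log n}\right)$.
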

\begin{proof}
The algorithm does almost the same actions as the classical counterpart. That is why the correctness of the algorithm follows from Theorem \ref{th:classical}.

Let us discuss the running time. Due to Theorem \ref{th:classical}, the running time of the procedure $\textsc{ConstructSuffixArray}$ is $O(n)$, the running time of the procedure $\textsc{ConstructSegmentTree}$ is $O(n)$, the running time of the procedure $\textsc{Push}$ is $O(n)$, the running time of the array $long$  construction is $O(n)$, the running time of $\textsc{ConstructQI}$ is $O(n)$.

Due to Lemma \ref{lm:str-compr}, the running time of $\textsc{QCompare}$ for $s^j$ is $O(\sqrt{|s^j|}(\log m+\log\log n))$. The procedure $\textsc{QSearchSegment}$ invokes $\textsc{QCompare}$ procedure $O(\log n)$ times for each string $s^1,\dots s^m$. So, the complexity of processing all strings from $s$ is

\[O\left(\log n\cdot(\log m+\log\log n)\cdot \sum_{j=1}^m\sqrt{|s^j|}\right)\]

Let us use the Cauchy-Bunyakovsky-Schwarz inequality and $L=\sum_{j=1}^m|s^j|$ equality for simplifying the statement.

\[\leq O\left(\log n\cdot(\log m+\log\log n)\cdot \sqrt{m\sum_{j=1}^m|s^j|}\right)=O\left(\log n\cdot(\log m+\log\log n)\cdot \sqrt{m\cdot L}\right).\]

The total running time is \[O\left(n+n+n +\log n\cdot(\log m+\log\log n)\cdot \sqrt{m\cdot L}\right)=\]\[=O\left(n +\log n\cdot(\log m+\log\log n)\cdot \sqrt{m\cdot L}\right)\]

Let us discuss the error probability. The algorithm invokes $\textsc{QCompare}$ procedure $2m\lceil\log_2 n \rceil\leq 2m(1+\log_2 n)$ times.
The success probability is \[O\left(\left(1-\frac{1}{(m\log n)^3}\right)^{2m(1+\log_2 n)}\right)=O\left(\frac{1}{m\log n}\right).\]
\Endproof
\end{proof}

\section{Lower Bound}\label{sec:lower}

Let us discuss the lower bound for the running time of classical algorithms.

\begin{theorem}\label{th:dfreq-compl}
Any randomized algorithm for the text $t$ constructing from a dictionary $s=(s^1,\dots,s^m)$ problem works in $\Omega(n+L)$ running time, where $L=|s^1|+\dots+|s^m|$.
\end{theorem}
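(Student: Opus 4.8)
The plan is to prove the lower bound $\Omega(n+L)$ by establishing the two terms separately through an adversary (or information-theoretic) argument, showing that any correct randomized algorithm must read essentially all of the input. The key observation is that the running time in this model counts queries to the black box holding $t$ and the strings $s^1,\dots,s^m$, so it suffices to exhibit instances on which a correct algorithm is forced to query $\Omega(n)$ symbols of $t$ and, in a separate family, $\Omega(L)$ symbols across the dictionary.

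First I would handle the $\Omega(n)$ term. Consider a text $t$ of length $n$ together with a dictionary that forces the decomposition to inspect every position of $t$. A clean construction is to take $m$ strings each of small constant length and argue that deciding whether a valid tiling $q_1,\dots,q_z$ exists requires knowing each symbol $t_i$: flipping a single unread symbol $t_i$ can change a valid decomposition into an invalid one (or alter which $s^j$ matches at a given position), so an algorithm that skips any position can be fooled. Formally, I would fix an instance with a unique valid decomposition and show that for each position $i$ there is a modified instance, differing only in $t_i$, whose answer differs; by Yao's principle applied to a uniform distribution over such perturbations, any randomized algorithm with bounded error must query $\Omega(n)$ positions in expectation.

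For the $\Omega(L)$ term I would use a symmetric argument on the dictionary side. The total length $L=\sum_j|s^j|$ of the dictionary must be read because each symbol $s^j_k$ can be decisive for whether $s^j$ matches some substring of $t$, and hence for whether it participates in a valid decomposition. I would construct a base instance in which every $s^j$ is used in the (unique) decomposition, then argue that altering any single symbol $s^j_k$ destroys the match of $s^j$ against $t$ and changes the answer. Again invoking Yao's principle against the uniform distribution over single-symbol perturbations of the $s^j$'s gives that $\Omega(L)$ symbols of the dictionary must be queried. Combining the two families (or taking their ``direct sum'' so that an algorithm cannot amortize one against the other) yields the claimed $\Omega(n+L)$.

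The main obstacle I anticipate is the combination step: an adversary that only hides information in $t$ proves $\Omega(n)$, and one that only hides it in $s$ proves $\Omega(L)$, but to get the additive bound $\Omega(n+L)$ in a single instance I must ensure the two hard parts are genuinely independent, so that answering correctly requires solving both simultaneously rather than exploiting structure in one to avoid reading the other. I would address this by embedding independent perturbations in both the text and the dictionary of one instance and checking that the answer is a nontrivial function of both coordinates, so the two $\Omega(\cdot)$ query lower bounds add. A secondary subtlety is making the decomposition unique (or at least robust under the perturbations) so that each hidden symbol is provably decisive; careful choice of string lengths and alphabet positions, rather than heavy machinery, should suffice here.
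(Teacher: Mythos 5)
Your overall strategy---reduce to an unstructured-search/sensitivity argument and handle the two terms separately---is the right spirit, and your $\Omega(n)$ half is essentially sound (the paper realizes it in the extreme form $m=1$, $s^1=\mbox{``1''}$, with $t=1^n$ versus $t$ with a single planted $0$, so that every position of $t$ is sensitive). But your ``main obstacle,'' the combination step, is both a non-problem and, as you propose to resolve it, a gap. Since $n+L\leq 2\max(n,L)$, proving $\Omega(L)$ on a family with $L\geq n$ and $\Omega(n)$ on a separate family with $L\leq n$ already gives $\Omega(\max(n,L))=\Omega(n+L)$; this is exactly what the paper does, and no single instance combining both perturbations is needed. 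Conversely, the additivity you invoke---``the two $\Omega(\cdot)$ query lower bounds add'' because ``the answer is a nontrivial function of both coordinates''---is not a theorem: randomized query lower bounds do not add merely because the answer depends on both blocks, and you supply no direct-sum argument. Drop the combination attempt and take the maximum.

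The more serious gap is in your $\Omega(L)$ family, where you orient the perturbation the hard way: a YES base instance with a unique decomposition using \emph{every} $s^j$, such that flipping any single symbol $s^j_k$ yields a NO instance. For this you must guarantee (i) that the corrupted $s^j$ matches no window of $t$ at all---i.e., every dictionary string is at Hamming distance at least $2$ from every window of $t$ other than its home position, a coding-theoretic condition over a binary alphabet that ``careful choice of string lengths'' will not deliver by itself---and (ii) that no rearrangement of the remaining strings covers the interval vacated by $s^j$; note that having every string used and essential while $L\gg n$ forces heavy overlaps, which makes (ii) delicate. You acknowledge the uniqueness issue but do not construct such an instance, and a NO answer must rule out \emph{all} decompositions, so every one of the $L$ flips must be verified. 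The paper sidesteps all of this by reversing the direction: the base instance is NO ($t$ all zeros except a single $1$ at position $\lfloor n/2\rfloor$, all dictionary strings all-zero with $|s^i|\leq n/2$), and planting a single $1$ anywhere in any $s^z$ produces a YES instance by aligning that $1$ with the $1$ of $t$ and padding with zero strings. There a YES certificate is exhibited explicitly for each perturbation, the sensitive set has size $\Theta(L)$ for free, and the bound follows directly from the known $\Omega(L)$ randomized lower bound for unstructured search. As written, your $\Omega(L)$ half rests on an instance whose existence is unestablished and which is strictly harder to build than necessary.
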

\begin{proof}
Assume $L>n$.
Let us consider $t=(t_1,\dots,t_n)$ such that $t_{\lfloor n/2\rfloor}=1$ and $t_{i}=0$ for all $i\in\{1,\dots, n\}\backslash\{\lfloor n/2\rfloor\}$.

Let $|s^i|\leq n/2$ for each $i\in\{1,\dots,m\}$. Note, that in a general case $|s^i|\leq n$. Therefore, we reduce the input data only at most twice.
Assume that we have two options: 
\begin{itemize}
\item all $s^i$ contains only $0$s;
\item there is $z$ such that we have two conditions:
\begin{itemize}
\item all $s^i$ contains only $0$s, for $i\in\{1,\dots, m\}\backslash\{ z\}$;
\item for $j_0<|s^z|$, $s^z_{j_0}=1$ and $s^z_{j}=0$ for $j\in\{1,\dots, |s^z|\}\backslash\{ j_0\}$.
\end{itemize}
\end{itemize}
In a case of all $0$s, we cannot construct the text $t$. In a case of existing $0$ in the first half, we can construct $t$ by putting $s^z$ on the position $j_0-\lfloor n/2\rfloor+1$ and we get $1$ of the required position. Then, we complete $t$ by other $0$-strings.

Therefore, the solution of the problem is equivalent to the search of $1$ in unstructured data of size $L$. The randomized complexity of this problem is $\Omega(L)$ due to \cite{bbbv1997}. 

Assume $L<n$. Let $m=1$, $|s^1|=1$ and $s^1_1=1$.
Assume that we have two options:
\begin{itemize}
    \item $t$ contains only $1$s;
    \item there is $g$ such that $t_g=0$ and $t_j=1$ for all $j\in\{1,\dots,n\}\backslash\{g\}$.
\end{itemize}

In the first case, we can construct $t$ from $s$. In the second case, we cannot do it.

Here the problem is equivalent to search $0$ among $n$ symbols of $t$. Therefore, the problem's randomized complexity is $\Omega(n)$.

Finally, the total complexity is $\Omega(\max(n,L))=\Omega(n+L)$.
\Endproof
\end{proof}

%%%%%%%%%%%%%%%%%%%%%%%%%%%%%%%%%%%%%%%%%%%%%%%%%%%%%%%%%%%%%%%%%%%%%%%%%%%%%%%
% Please submit a BibTeX file as well as the resulting .bbl file with referred 
% entries together with your paper submission. 
\bibliographystyle{psc}          % please use the PSC bib-style
\bibliography{tcs}          % referring sample.bib
% (References are created when you call latex and then bibtex.)

\end{document}